\newcommand{\OPT}{\mathrm{OPT}}
\newcommand{\calA}{\mathcal{A}}
\newcommand{\ALG}{\mathrm{ALG}}
\newcommand{\bbN}{\mathbb{N}}
\newtheorem*{theorem*}{Theorem}
\newtheorem*{quest*}{Question}
\begin{document}
\title{Almost Tight Bounds for\\ Online Hypergraph Matching}
%
%\titlerunning{Abbreviated paper title}
% If the paper title is too long for the running head, you can set
% an abbreviated paper title here
%
\author{Thorben Tr\"obst\inst{1} \and Rajan Udwani\inst{2}}
\authorrunning{T. Tr\"obst and R. Udwani}
\institute{Department of Computer Science,\\
University of California, Irvine,\\
\email{t.troebst@uci.edu}\\
\and\
Department of Industrial Engineering and Operations Research,\\
University of California, Berkeley,\\
\email{rudwani@berkeley.edu}
}
\maketitle              % typeset the header of the contribution
\begin{abstract}
    In the online hypergraph matching problem, 
    hyperedges of size $k$ over a common ground set arrive online in adversarial order.
    The goal is to obtain a maximum matching (disjoint set of hyperedges). A na\"ive greedy algorithm for this problem achieves a competitive ratio of $\frac{1}{k}$. We show that no (randomized) online algorithm has competitive ratio better than
    %Let $\alpha$ be the competitive ratio of the best online algorithm for this problem, then we show that $\frac{1}{k} \leq \alpha \leq 
    $\frac{2+o(1)}{k}$.
    If edges are allowed to be assigned fractionally, we give a deterministic online algorithm with competitive ratio
    $\frac{1-o(1)}{\ln(k)}$ and show that no online algorithm can have competitive ratio strictly better than $\frac{1+o(1)}{\ln(k)}$. Lastly, we give a $\frac{1-o(1)}{\ln(k)}$ competitive algorithm for the fractional
    \emph{edge-weighted} version of the problem under a free disposal assumption.

    \keywords{Online algorithms \and hypergraph matching \and set packing}
\end{abstract}
\section{Introduction}
In the classic problem of online bipartite matching (OBM), we have bipartite graph between resources (offline vertices) and agents (online vertices). Agents arrive sequentially and edges incident on an agent are revealed on arrival. Each agent must be matched to at most one available resource. Matches are irrevocable and the goal is to maximize the size of the overall matching.  We evaluate the performance of an online algorithm using its competitive ratio (CR). CR is the worst case ratio between the cardinality of online matching and the optimal offline solution, which, for OBM, is the maximum cardinality matching. 

 OBM captures the essence of resource allocation under sequential and heterogeneous demand \cite{karp1990optimal}. Fundamental to OBM and its many variations \cite{mehta2013online}, is the assumption that every agent seeks at most one type of resource. %Settings such as online budgeted allocation or Adwords \cite{mehta2007adwords}, model scenarios where an agent may be allocated multiple units of a single type of resource. 
In a variety of settings, ranging from revenue management in airlines \cite{simpson1989using,talluri1998analysis} to combinatorial auctions \cite{kesselheim2013optimal,korula2009algorithms} and ridesharing \cite{pavone2022online}, agents require a bundle (set) of resources and an allocation that does not include every resource in the bundle has no value. Inspired by this, we study a fundamental generalization of OBM, namely, online hypergraph matching. A hypergraph is a generalization of a graph where an edge can join any number
of vertices. The maximum cardinality of an edge, or hyperedge, is the rank of the hypergraph, i.e. hyperedges of a rank $k$ hypergraph have at most $k$ vertices. A matching in a hypergraph is a set of disjoint edges.
In online hypergraph matching, edges are revealed sequentially in a (possibly) adversarial order. On arrival of an edge, we must make an immediate an irrevocable decision to include the edge in the matching or reject it forever. The objective is to maximize the size of the matching. In the fractional version of the problem, arriving edges can be fractionally included in the matching, subject to the constraint that the total fraction of edges incident on any vertex is at most one.  %design an online algorithm that 
%and we seek an (online) algorithm with best possible competitive ratio against the optimal offline hypergraph matching. %Alternatively, all hyperedges (bundles) incident on an agent are revealed on arrival of the agent. Similar to OBM, each agent must be irrevocably matched on arrival to at most one bundle.

In case of OBM, the offline optimum can be found in polynomial time. In sharp contrast, for a rank $k$ hypergraph it is NP hard to find a $\frac{\Omega(\log k)}{ k}$ approximation for the offline hypergraph matching problem, unless P=NP \cite{hazan2006complexity}. As $k$ increases, the separation between these problems widens since the hypergraph matching problem becomes harder to approximate.
In this paper, our goal is to find tight upper and lower bounds for online hypergraph matching for large $k$. We consider both the integral and fractional %fractionally matched to various bundles.} 
version of the online hypergraph matching problem.

In the integral case, it can be shown that a simple greedy algorithm that always includes an arriving edge in the matching, if feasible, is $\frac{1}{k}$ competitive. In fact, no deterministic algorithm can do better in the worst case. From the hardness of the offline problem, no polynomial time (online) algorithm can have competitive ratio better than $\frac{\Omega(\log k)}{ k}$ (unless P=NP). 
For the fractional case, where the computational hardness disappears\footnote{In the fraction version, we evaluate competitive ratio against a fractional relaxation of the offline problem that can solved in polynomial time by using linear programming.}, a result for the online packing problem \cite{buchbinder2009online}  gives a $O(\log k)$ competitive algorithm for online hypergraph matching. %However, the hardness result for the (integral) offline problem shows that no 
This exponential gap between the fractional and integral setting raises the following natural question:
\medskip

%\begin{quest*}
\noindent \emph{For some $\epsilon>0$, is there an (exponential time) online algorithm with competitive ratio $\frac{O(1)}{k^{1-\epsilon}}$ for (integral) online hypergraph matching?}
\medskip

%\end{quest*}
%\subsection{Our Contributions}
We answer this question in the negative and establish the following result. Let $o(1)=f(k)$ for some non-negative function $f(k)$ such that $\lim_{k\to+\infty} f(k) =0$. 

\begin{theorem*}[Informal]
	No (randomized and exponential time) online algorithm can achieve a competitive ratio better than $\frac{2+o(1)}{k}$ for online hypergraph matching.
	\end{theorem*}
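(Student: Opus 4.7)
My plan is to apply Yao's minimax principle: it suffices to construct a distribution $\mathcal{D}$ over rank-$k$ instances such that every deterministic online algorithm $A$ satisfies $E_{I \sim \mathcal{D}}[\mathrm{ALG}(A,I)] \le \tfrac{2+o(1)}{k}\, E_{I \sim \mathcal{D}}[\mathrm{OPT}(I)]$. Since any randomized algorithm is a convex combination of deterministic ones, this bound transfers to the competitive ratio of any randomized algorithm.

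The instance I would construct is a layered hypergraph of depth $L = \Theta(k)$ together with a random truncation. Concretely, edges are organized into $L+1$ levels where level $t$ contains $\Theta(k^t)$ hyperedges of size $k$, and the vertex assignment is engineered so that (i) edges at the deepest revealed level are pairwise disjoint---providing a matching of size $\Theta(k^T)$ when only levels $\le T$ are revealed---and (ii) any edge accepted at an earlier level obstructs a cascading family of edges at all deeper levels (not merely at the next level). The adversary samples a random stopping depth $T \in \{0, \ldots, L\}$ with $\Pr[T \ge t] = 1/k^t$ (truncated at $L$) and presents the edges at levels $0, 1, \ldots, T$ in BFS order. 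This calibration is chosen so that the factor-$k$ growth of the edge count per level exactly cancels the decay of $\Pr[T \ge t]$; consequently $E[\mathrm{ALG}]$ collapses to a bounded constant while $E[\mathrm{OPT}] = \Theta(L) = \Theta(k)$.

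The computation then proceeds by symmetry and linearity. By Yao, plus averaging over the automorphism group of the construction, one restricts without loss of generality to algorithms characterized by a single level-dependent acceptance probability $\alpha_t \in [0,1]$. Writing
\[
    E[\mathrm{ALG}] \;=\; \sum_{t=0}^{L} (\text{edges at level } t) \cdot \Pr[T \ge t] \cdot \alpha_t \cdot \Pr[\text{free} \mid t] \;=\; \sum_{t=0}^{L} \alpha_t \prod_{s < t}(1-\alpha_s) \;=\; 1 - \prod_{s \le L}(1-\alpha_s) \;\le\; 1,
\]
together with $E[\mathrm{OPT}] \ge E[k^T] \approx L(k-1)/k+1 = \tfrac{k}{2}(1-o(1))$ for $L = \lceil k/2 \rceil$, dividing yields the claimed $(2+o(1))/k$ bound.

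The main obstacle I expect is engineering the cascading conflict structure in (ii). A naive parent-child tree---in which an edge conflicts only with its immediate parent and children---admits an ``alternating acceptance'' strategy (skip level $0$, accept everything at level $1$, skip level $2$, etc.) that achieves ratio $\tfrac{1}{2}$, not $O(1/k)$, because grandchildren are not blocked by grandparents. Defeating this attack---while keeping every hyperedge of size exactly $k$, and keeping the leaves at the deepest revealed level pairwise disjoint so the OPT lower bound above goes through---requires a more intricate design, for instance bundling $k$ pairwise-conflicting edges at each node so that each acceptance ``locks in'' a single choice among many equivalent options, or introducing carefully placed shared vertices across non-adjacent levels so that an acceptance at level $s$ genuinely blocks a $k$-multiple of OPT's mass at each deeper level. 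Once this denser conflict pattern is pinned down, the ratio calculation reduces to the geometric-sum bookkeeping displayed above.
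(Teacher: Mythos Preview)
Your approach has a real gap that you have already flagged but not closed, and the obstacle is more fundamental than your closing paragraph suggests. The displayed identity $E[\mathrm{ALG}]=\sum_t\alpha_t\prod_{s<t}(1-\alpha_s)\le 1$ requires that every edge at level $t$ directly conflicts with each of its $t$ ancestors---in particular with the single root edge at level $0$. But the edges at the deepest revealed level are supposed to be pairwise disjoint (this is what gives $\mathrm{OPT}\ge k^T$); since each of those $k^t$ disjoint edges must then share a vertex with the root, and the root has only $k$ vertices, pigeonhole forces $k^t\le k$, i.e.\ $t\le 1$. So a tree with branching factor $k$ and the ancestral-conflict property can have depth at most one, and your target $L=\lceil k/2\rceil$ is unreachable. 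More generally, with branching factor $b$ the same argument caps the depth at $\log_b k$, so even the best choice of $b$ yields only $E[\mathrm{OPT}]=O(\log k)$ and hence only a $1/\log k$ upper bound---nowhere near $2/k$. Your two suggested fixes do not escape this: shared vertices across non-adjacent levels \emph{is} the ancestral-conflict requirement that triggers pigeonhole, and ``bundling'' alone does not extend the depth.

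The paper's construction sidesteps this entirely. Instead of a branching tree with random truncation, it builds a single \emph{chain} of $k$ phases; in each phase two indistinguishable edges arrive and one is colored red, the other blue, uniformly and independently. Every blue edge intersects all later edges, while every red edge is disjoint from all later edges. Thus the $k$ red edges form a matching ($\mathrm{OPT}=k$), any algorithm picks at most one blue edge in total, and because red and blue are indistinguishable at decision time the expected number of reds equals the expected number of blues, giving $E[\mathrm{ALG}]\le 2$. Getting $k$ phases (rather than the $k/2$ that a direct construction permits) is done by recursion: the first $k/2$ phases leave behind $k/2$ disjoint vertex sets each meeting every blue edge, and one grafts a copy of the $k/2$-phase gadget onto them. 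The randomness is in the coloring, not the stopping time, and the blocking structure is linear, not arboreal---this is precisely what allows depth $\Theta(k)$ without the pigeonhole collision.
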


%We would like to note that, 
Unlike the state-of-the-art complexity theoretic upper bound of $\frac{\Omega(\log k)}{ k}$, our result is unconditional and arises from the online nature of the problem (as opposed to any computational barriers). For the fractional case, we give new upper and lower bounds, closing the existing constant factor gap for large $k$.
\begin{theorem*}[Informal]
	There is an efficient algorithm for fractional online hypergraph matching with competitive ratio $\frac{1-o(1)}{\ln k}$. This is the best possible (asymptotic) CR for any online algorithm.
	\end{theorem*}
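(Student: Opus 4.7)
The proof has two parts: an online primal--dual algorithm achieving competitive ratio $(1-o(1))/\ln k$, and a matching adversarial lower bound.

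\textbf{The algorithm.} The plan is to instantiate and sharpen the online primal--dual framework of Buchbinder and Naor for the fractional hypergraph matching LP. Maintain a dual variable $y_v$ for each vertex $v$, initialized to a small value $y_0$. When a hyperedge $e$ arrives, process it as a continuous water-filling: simultaneously raise $x_e$ while updating, for each $v\in e$,
\[
    \dot y_v \;=\; \bigl(\alpha\, y_v + \beta\bigr)\,\dot x_e,
\]
and stop as soon as $\sum_{v\in e} y_v = 1$. The stopping rule keeps the dual feasible throughout. I would then show by induction that the load $\sum_{e\ni v} x_e$ on any vertex is at most $\ln k$ up to lower-order terms, so scaling $x$ by $(1-o(1))/\ln k$ yields a feasible fractional matching. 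A weak-duality calculation establishes $\sum_e x_e \geq (1-o(1))\sum_v y_v \geq (1-o(1))\,\OPT$ before scaling, and the two steps compose to give the stated ratio. The main technical work is choosing $\alpha,\beta,y_0$ as functions of $k$ so that the primal-scaling loss and the weak-duality gap are \emph{simultaneously} $1+o(1)$; this amounts to solving a one-dimensional variational problem whose optimum is known to be attained essentially at $y_0 \sim 1/k$ with the exponential growth rate tuned so that $y_v$ just reaches $1$ when the vertex load reaches $\ln k$.

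\textbf{The impossibility.} I would use Yao's minimax principle to reduce to a distributional lower bound against deterministic online algorithms. Design a random instance with $T = \Theta(\ln k)$ phases, where in each phase $t \in [T]$ a family of $n$ pairwise vertex-disjoint size-$k$ hyperedges arrives (so the offline optimum of any prefix is at least $n$), but the phase-$t$ edges overlap edges of earlier phases so that fractional mass placed on a phase-$t$ edge proportionally blocks edges of later phases. Drawing the stopping time $\tau\in[T]$ uniformly and symmetrizing the vertex labels so that, conditioned on any observed prefix, all remaining phases are equally likely to be final, forces any deterministic algorithm to hedge by allocating at most $O(1/T)$ of the offline value per phase, yielding expected ratio $O(1/\ln k)$.

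The main obstacle is the precise calibration of the inter-phase overlap: to get the \emph{tight} constant $1/\ln k$ rather than merely $O(1/\ln k)$, each phase should add roughly $k\bigl(1-\Theta(1/T)\bigr)$ fresh vertices per edge while each edge overlaps exactly one previous-phase edge on $\Theta(k/T)$ vertices, with the arithmetic arranged so that the symmetrized fractional LP's dual value evaluates to $(1+o(1))/\ln k$ asymptotically, in analogy with the harmonic-sum calculation underlying the Karp--Vazirani--Vazirani bound for bipartite matching. Verifying this fine-grained LP duality--rather than only an order-of-magnitude bound--is what I expect to require the most care.
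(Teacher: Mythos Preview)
Your plan instantiates the Buchbinder--Naor dual-updating template: raise $x_e$ while growing each $y_v$ exponentially in the accumulated load, tolerate primal infeasibility up to a factor $\ln k$, then scale down. Tuned as you indicate this does yield $(1-o(1))/\ln k$, but it is not the paper's route. The paper's water-filling augments the \emph{primal} directly and never violates feasibility: it increases $y_e$ only while $\sum_{i\in e}(k\ln k)^{x_i-1}\le 1$, which in particular halts before any $x_i$ exceeds $1$. The ratio is then certified by splitting each infinitesimal primal gain into a revenue $r_i$ and a utility $u_e$ and showing $u_e+\sum_{i\in e}r_i\ge (1-1/\ln k)/(\ln k+\ln\ln k)$ for every edge; no rescaling step is needed. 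The paper even remarks on this primal-versus-dual distinction. Your approach trades the clean feasibility for a parameter-tuning step; both reach the same asymptotic constant.

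\textbf{Impossibility.} Here there is a real gap. With all edges of size $k$ and overlaps of $\Theta(k/T)$ vertices only between \emph{consecutive} phases, no single vertex lies in every edge of a chain, so no packing constraint forces $\sum_t x_{e_t}$ to be small; an algorithm that sets $x_e=1$ on all phase-$1$ edges already has value $n$ at every stopping time. Your random-$\tau$ argument therefore bites only if $\OPT(\tau)$ grows with $\tau$, but the very overlaps you introduce to constrain the algorithm constrain $\OPT$ equally. The paper resolves this tension with two ideas absent from your sketch. First, edges \emph{shrink} geometrically by a factor $1+\delta$ each phase, giving $\log_{1+\delta}k$ phases in which $l$ current edges are covered by roughly $(1+\delta)l$ smaller ones, so $\OPT$ gains about $\delta l$ per phase. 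Second, the adversary is \emph{adaptive}: after each batch arrives it retains precisely the $l$ new edges on which the algorithm placed the most mass, and recurses only on their union. This nesting is what concentrates the algorithm's output onto a family of edges all meeting the final handful of vertices, bounding its total mass by $l$ while $\OPT$ has grown to order $\delta l\log_{1+\delta}k$. Sending $\delta\to 0$ and $l\to\infty$ yields the tight $1/\ln k$. Note also that Yao is unnecessary here: for fractional outputs a randomized algorithm can be derandomized by taking expectations, so a direct adaptive adversary against a deterministic algorithm suffices.
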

%\TODO
\section{Model}\label{sec:model}
Consider a hypergraph $G$ with offline vertices (resources) $I$, online vertices  $t\in T$. A hyperedge $e\in E$ is a tuple $(S,t)$ where $S\subseteq I$.  
For a degree $k$ hypergraph, we have $|S|\leq k$ for every edge, i.e., an edge intersects at most $k$ offline vertices. In the vertex arrival model of online hypergraph matching, vertices $t\in T$ arrive sequentially. On arrival of a vertex $t$, all hyperedges incident on $t$ are revealed and we can irrevocably choose at most one edge, provided this edge does not intersect with any edge chosen prior to the arrival of $t$. The goal is to choose the maximum number of disjoint edges, i.e., maximize the size of the matching. In the  edge-arrival model, that we discussed previously and use more frequently in the paper, each online vertex has exactly one edge incident on it. In both models, the arrival sequence is adversarial.

In terms of competitive ratio guarantee, the vertex arrival model is equivalent to the edge arrival model for large $k$ (the regime of interest in this paper). To see this, observe that an $\alpha$ competitive algorithm for the vertex arrival model is also $\alpha$ competitive for the edge arrival model. Further, given an $\alpha$ competitive algorithm for degree $k+1$ hypergraphs in the edge arrival model, we can construct an online algorithm with CR $\alpha$ for degree $k$ hypergraphs in the vertex arrival model. %Further, we can transform an instance in the vertex arrival model into an instance of the edge arrival model as follows. 
%To make online decisions in the vertex arrival model, 
The algorithm for vertex arrivals will emulate the edge arrival based algorithm on a transformed instance with edge arrivals. On arrival of a vertex $t\in T$ with incident edges $\{e_1,e_2,\cdots, e_m\}$ in the vertex arrival model, we create $m$ new arrivals and add an additional offline vertex $i_t$ in the edge arrival instance. For $\ell\leq m$, the $\ell$-th new arrival in the edge arrival instance has a single edge $E_\ell\cup\{i_t\}$. We input the $m$ edges one-by-one (in arbitrary order) to the algorithm for edge arrivals. If the algorithm chooses edge edge $\ell\in[m]$, then we choose edge $e_\ell$ for arrival $t$ in the vertex arrival model. This procedure gives an online algorithm for the vertex arrival model with CR $\alpha$. %Using this transformation, an  $\alpha$ competitive algorithm for degree $k+1$ hypergraphs in the edge arrival model can be translated into an algorithm with CR $\alpha$ for degree $k$ hypergraphs in the vertex arrival model. %create an offline vertex $t$ and replace $t$ with $|E_t|$ arrivals such that each arrival has exactly one of the edges originally incident on $t$, augmented with the new offline vertex $t$. 

In the fractional version, the optimal offline solution in edge arrival model is a solution to the following linear program,
\begin{eqnarray*}
\textbf{Fractional LP:} \qquad \max_{\{x_e\}_{e\in E}} &\sum_{e\in E}x_e,\\
s.t.\, &\sum_{i\in e} x_e &\leq 1,\qquad \forall i\in I,\\
&x_e&\geq 0 \qquad \forall e\in E.
\end{eqnarray*}
Decision variable $x_e\in[0,1]$ in the LP captures the fraction of edge $e$ included in the matching. The constraint, $\sum_{i\in e} x_e \leq 1$, enforces total fraction of edges incident on a resource $i\in I$ to be at most 1. An online algorithm can include arbitrary fraction of each arriving edge, subject to the same constraint as the LP.

Finally, note that by adding dummy resources we can assume w.l.o.g., that every edge has exactly $k$ offline vertices. Unless stated otherwise, in the rest of the paper we consider the edge arrival model and assume that instances are $k$-uniform, i.e., every edge intersects $k$ offline vertices.

%{\color{red} Perhaps best to describe the model with multiple edges on each arrival before pointing out that one can consider a simpler model with one edge per arrival losing a factor $k/k+1$ in the lower bound?}
\subsection{Related Work}
Perhaps closest to our setting is the work of Buchbinder and Naor \cite{buchbinder2009online} on online packing. They considered an online packing problem that generalizes the fractional online hypergraph matching problem and gave a $O(\log k)$ competitive algorithm. Buchbinder and Naor \cite{buchbinder2009online} also showed that no online algorithm can have competitive ratio better than $\Omega(\log k)$ for fractional online hypergraph matching. A special case of the online packing problem was considered earlier in \cite{awerbuch1993throughput}, in the context of online routing. Another closely related line of work is on the problem of network revenue management (NRM) \cite{simpson1989using,talluri1998analysis}. This is a stochastic arrival setting where seats in flights are offline resources allocated to sequentially arriving customers. A customer with multi-stop itinerary requires a seat on each flight in the itinerary. %When the distribution of arrival sequence is known,
Recently, Ma et al.\ \cite{ma2020approximation} gave a $\frac{1}{k+1}$ algorithm for NRM. % and each itinerary can have a different price. 
Another stream of work has focused on hypergraph matching from the perspective of ridesharing. Pavone et al.\ \cite{pavone2022online} introduced a hypergraph matching problem with deadlines to capture applications in ridesharing. Their model and results are incomparable to ours. Lowalekar et al.\ \cite{lowalekar2020competitive} consider a model inspired by ridesharing but with a stochastic arrival sequence. Finally,  \cite{kesselheim2013optimal,korula2009algorithms} consider related settings in combinatorial auctions that correspond to online hypergraph matching with stochastic arrivals.

For the offline hypergraph matching problem, Hazan et al.\ \cite{hazan2006complexity} showed that unless P=NP, no polynomial time algorithm can find an matching better than $O(\frac{\log k}{k})$ of the optimum matching. \cite{chan2012linear,cygan2013improved} give approximation algorithms for the problem. To the best of our knowledge, the state-of-the-art result is a $\frac{3}{k}$ approximation due to Cygan \cite{cygan2013improved}. 

\section{Integral Matchings}

In the following, fix some $k \geq 2$.
We will focus on the online hypergraph matching problem in the edge arrival model.
We start with a result that is folklore in the literature on online matching \cite{mehta2013online}.
\begin{theorem}[Folklore]
    There is a $\frac{1}{k}$-competitive algorithm for the
    online hypergraph matching problem. This is the best possible CR for deterministic algorithms.
\end{theorem}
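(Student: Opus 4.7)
The plan is to establish the positive direction with the natural greedy algorithm and the negative direction with a standard two-case adversarial construction.

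For the positive direction, I would analyze the algorithm $\ALG$ that, upon arrival of an edge $e$, adds $e$ to its matching $M$ whenever $e$ is disjoint from every edge already in $M$. Let $M^*$ be any offline maximum matching. The key observation is that each edge $e^* \in M^*$ must intersect some edge in $M$: either $e^* \in M$ itself, or at the time $e^*$ arrived greedy rejected it, meaning $e^*$ shared a vertex with an edge already in $M$ (and $M$ only grows over time). Now charge each $e^* \in M^*$ to an edge of $M$ it intersects. Each edge of $M$ has exactly $k$ offline vertices (by the $k$-uniformity reduction noted in Section~\ref{sec:model}), and since $M^*$ is a matching, each such vertex lies in at most one edge of $M^*$. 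Thus each edge of $M$ receives at most $k$ units of charge, giving $|M^*| \le k |M|$ and hence competitive ratio $\frac{1}{k}$.

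For the negative direction, fix any deterministic online algorithm $\calA$. The adversary first presents a single edge $e_1 = \{v_1, \dots, v_k\}$. If $\calA$ rejects $e_1$, the adversary halts the input: then $\OPT = 1$ while $\ALG = 0$, so the ratio is $0 \le \frac{1}{k}$. If $\calA$ accepts $e_1$, the adversary then presents $k$ further edges $f_1, \dots, f_k$, where $f_i = \{v_i\} \cup W_i$ and the $W_i$ are pairwise disjoint sets of $k-1$ fresh offline vertices. Each $f_i$ intersects $e_1$ at $v_i$, so $\calA$ cannot add any of them to its matching, leaving $\ALG = 1$; but $\{f_1, \dots, f_k\}$ is a matching of size $k$, so $\OPT \ge k$ and the ratio is at most $\frac{1}{k}$. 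In either branch $\calA$ achieves no better than $\frac{1}{k}$, which is tight by the previous paragraph.

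Neither step presents a real obstacle; the only mild care needed is to verify $k$-uniformity is without loss of generality (already justified in Section~\ref{sec:model}) so that the counting argument in the greedy analysis is clean, and to confirm that the two-case adversarial construction indeed forces a deterministic algorithm into one of the two unfavorable branches regardless of its strategy.
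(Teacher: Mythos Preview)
Your proof is correct and follows essentially the same approach as the paper: the greedy analysis via charging each optimum edge to an intersecting algorithm edge is exactly the paper's argument (phrased there as a vertex-count over $I_o$), and your two-branch adversarial lower bound matches the paper's two arrival sequences.
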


\begin{proof}
    Consider the online algorithm that includes an arriving edge $e$ in the matching if it is disjoint with all previously included edges. % $S\subseteq V$, %whenever a set (edge) $S \subseteq V$ arrives such that all of its elements are unmatched,  it.
    Let $e_1, \ldots, e_\ell$ be the set of edges included in an offline optimum solution and let $I_o\subseteq I$ denote the set of offline vertices that are covered by the edges chosen in the online algorithm.
    For each edge $e_i$, at least one of the resources that it intersects %elements 
    must be included in $I_o$. Thus, $|I_o|\geq \frac{1}{k} \ell\,k$ and
    %Therefore, 
    the online algorithm picks at least $\frac{1}{k}\ell$ hyper edges. 
    
    To see that this is the best possible CR, consider two arrival sequences. In the first sequence, we have a single arrival. In the second sequence, we augment the first sequence with $k$ more arrivals such that the optimal offline matching has size $k$ but the first edge intersects every other edge. A deterministic algorithm with non-zero competitive ratio must match the first arrival.
\end{proof}
In our first result, we show that even a randomized and possibly exponential time online algorithm cannot achieve a much better competitive ratio for this problem. To show this, we will use Yao's minimax principle, as stated below.
\begin{lemma}[Yao's Principle]\label{lem:yao}
    Let $\alpha$ be the best competitive ratio of any randomized algorithm.
    Let $\beta$ be the competitive ratio of the best deterministic algorithm against some fixed
    distribution of instances.
    Then $\alpha \leq \beta$.
\end{lemma}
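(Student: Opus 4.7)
The plan is to view the interaction between algorithm and instance as a two-player zero-sum game and invoke only the easy direction of the minimax inequality, so no fixed-point or compactness machinery is required. First, I would identify any randomized online algorithm $R$ with a probability distribution $\mathcal{D}_R$ over deterministic online algorithms, indexed by the realization of its random tape. Under this identification, the competitive ratio of $R$ equals $\inf_{I}\mathbb{E}_{A\sim\mathcal{D}_R}[\ALG(A,I)/\OPT(I)]$, and $\alpha$ is the supremum of this quantity over all such $\mathcal{D}_R$.

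Next, I would fix the instance distribution $\mathcal{D}_I$ referenced in the lemma and set $\beta=\sup_{A}\mathbb{E}_{I\sim\mathcal{D}_I}[\ALG(A,I)/\OPT(I)]$, where the supremum ranges over deterministic algorithms $A$. The heart of the argument is the following chain, valid for every distribution $\mathcal{D}_R$:
\begin{align*}
\inf_{I}\mathbb{E}_{A\sim\mathcal{D}_R}[\ALG(A,I)/\OPT(I)]
&\leq \mathbb{E}_{I\sim\mathcal{D}_I}\mathbb{E}_{A\sim\mathcal{D}_R}[\ALG(A,I)/\OPT(I)]\\
&=\mathbb{E}_{A\sim\mathcal{D}_R}\mathbb{E}_{I\sim\mathcal{D}_I}[\ALG(A,I)/\OPT(I)]\\
&\leq \sup_{A}\mathbb{E}_{I\sim\mathcal{D}_I}[\ALG(A,I)/\OPT(I)].
\end{align*}
The first step uses that an expectation is at least the infimum of its random variable, the equality is a swap of the order of integration (Fubini, legitimate since the ratio lies in $[0,1]$), and the last step uses that an expectation is at most the supremum. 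Taking the supremum of the left-hand side over $\mathcal{D}_R$ yields $\alpha\leq\beta$, which is precisely the statement of the lemma.

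The main obstacle is really just notational bookkeeping: one must pin down a clean model of a deterministic online algorithm as a decision rule mapping arrival prefixes to feasible outputs, and then verify that every randomized online algorithm against an oblivious adversary can be written as a mixture of such rules (the adversary commits to an instance independently of the random tape, so this is standard). Once this modelling step is settled, the proof reduces to the one-line exchange of expectations above, and the hard direction of von Neumann's minimax theorem is never invoked.
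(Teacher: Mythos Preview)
Your argument is correct and is exactly the standard proof of Yao's principle via the easy direction of the minimax inequality. The paper itself does not prove this lemma; it is stated as a known result and invoked as a black box, so there is nothing further to compare.
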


Before we get to our main result, %that there is no $(\frac{2+o(1)}{k})$ competitive algorithm
%for the $k$-uniform online hypergraph matching problem, 
we will first give a slightly weaker result
that serves both as a warm up and as a gadget for the main result. Recall that, %we are interested in the hardness for large $k$ and we use 
$o(1)=f(k)$ for some non-negative function $f(k)$ such that $\lim_{k\to+\infty} f(k) =0$.  

\begin{theorem}\label{thm:4_over_k}
    %If $k$ is even, then 
    There does not exist a $\frac{4+o(1)}{k}$ competitive algorithm
    for the $k$-uniform online hypergraph matching problem for any $\epsilon > 0$.
\end{theorem}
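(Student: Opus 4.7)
The plan is to invoke Yao's principle (Lemma~\ref{lem:yao}) by constructing a distribution $\mathcal{D}$ over $k$-uniform instances on which every deterministic online algorithm satisfies $\mathbb{E}_{\mathcal{D}}[\ALG] \leq \frac{4+o(1)}{k} \cdot \mathbb{E}_{\mathcal{D}}[\OPT]$. The basic building block I would use is a ``bait-and-switch'' gadget: reveal one bait edge first, then $k$ ``good'' edges, each sharing a single vertex with the bait and otherwise fresh and pairwise disjoint. Committing to the bait caps the online algorithm at a matching of size $1$ in the gadget, while offline one can match all $k$ good edges.

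First I would nest $L = L(k)$ such gadgets into one instance, arranging the bait edge of gadget $\ell+1$ to share a vertex with a good edge of gadget $\ell$, so that committing to an early bait also forecloses matching value attainable by deferring. Then I would have the adversary draw a random stopping time $T \in \{1, \ldots, L\}$ and reveal only the first $T$ gadgets, with $T$ taken from a heavy-tailed distribution whose decay is calibrated to the per-gadget gain structure. Intuitively, an algorithm that commits to an early bait misses a much larger offline matching at deeper gadgets on most realizations of $T$, while an algorithm that never commits ends up with an empty matching on the common small realizations of $T$.

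With the right distribution on $T$, one can show that both extreme strategies yield the same expected value, and then argue by induction that no intermediate strategy does better. The target bounds are $\mathbb{E}[\ALG] = O(1)$ and $\mathbb{E}[\OPT] = \Omega(k)$, which together give the claimed $\frac{4+o(1)}{k}$ ratio; the constant $4$ absorbs boundary losses in the geometric sum and slack introduced by coupling successive gadgets. I expect the main obstacle to be the analysis of $\mathbb{E}[\ALG]$ under a general adaptive strategy, which can interleave partial commitments across gadgets in combinatorially rich ways. Handling this requires a careful inductive argument tracking the algorithm's ``state'' after each gadget and showing that any mixed commitment pattern is weakly dominated by one of the two extremes. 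This warm-up gadget should then be reused directly in the proof of the sharper $\frac{2+o(1)}{k}$ bound constituting the main theorem.
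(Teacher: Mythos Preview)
Your proposal has a genuine gap: the construction as described does not force the claimed $\frac{4+o(1)}{k}$ bound. In your bait-and-switch gadget, the bait edge and the $k$ good edges are structurally distinguishable to the online algorithm---the bait arrives first and alone, and the good edges arrive afterward. Hence the deterministic strategy ``never take a bait; take every good edge'' is available. Since the good edges within a gadget are pairwise disjoint and (per your description) use fresh vertices apart from one vertex shared with the bait, the good edges across all revealed gadgets form a matching. An algorithm following this strategy gets $k(T-1)$ or $kT$ depending on whether the last gadget is revealed in full, against an $\OPT$ of the same order; the random stopping time only hurts when $T=1$, contributing $\OPT=1$ and $\ALG=0$ on that single event, which is far too little to push the expected ratio down to $O(1/k)$. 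Your linear coupling (the bait of gadget $\ell+1$ shares one vertex with one good edge of gadget $\ell$) does not help, because skipping baits costs nothing and taking good edges of gadget $\ell$ does not block the good edges of gadget $\ell+1$.

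The missing idea, which the paper's proof supplies, is to make the ``good'' and ``bad'' edge in each phase \emph{indistinguishable at arrival time}. The paper runs $k/2$ phases; in each phase two edges $e_1,e_2$ arrive that are symmetric from the algorithm's viewpoint (both intersect every previously designated blue edge and no red edge), and only then is one of them, uniformly at random, declared red (disjoint from all future edges) and the other blue (intersecting all future edges). This symmetry forces $\Pr[\text{pick red in phase }i]=\Pr[\text{pick blue in phase }i]$; since the blue edges pairwise intersect, at most one blue edge is ever picked, so both sums are at most $1$ and $\mathbb{E}[\ALG]\le 2$ against $\OPT=k/2$. No stopping time and no inductive analysis over adaptive strategies is needed---the whole argument is a few lines once the symmetric gadget is in place. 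Your instinct that this gadget is then reused in the sharper $\frac{2+o(1)}{k}$ bound is correct.
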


\begin{proof}
    Using Yao's principle, we will construct a distribution of instances with even $k$ where $\OPT = \frac{k}{2}$ but
    the best deterministic online algorithm can only achieve an expected matching size of $2$.

    For any given even value $k$, the overall (random) instance $G_k$ will consist of $\frac{k}{2}$ \enquote{red} edges and $\frac{k}{2}$ \enquote{blue} edges
    constructed in $\frac{k}{2}$ phases.
    See Figure~\ref{fig:4_over_k_ex} for an example.
    In each phase, do the following:

    \begin{enumerate}
        \item Let $A$ be a set of vertices such that every vertex in $A$ is contained in exactly
            one blue edge and every blue edge contains exactly one vertex from $A$.
            Create a new edge $e_1$ which consists of $A$ and $k - |A|$ many new vertices that
            have not been in any edges yet.
            Let $e_1$ arrive in the instance.
        \item Repeat step~1 to get another edge $e_2$ with the same properties (note that, in
            paritcular, $e_2$ will intersect $e_1$).
            Let $e_2$ arrive in the instance.
        \item Randomly let one of $\{e_1, e_2\}$ be red and the other blue with equal probability.
    \end{enumerate}

    The crucial properties of this construction are that each blue edge intersects all future edges
    whereas each red edge is disjoint from all future edges.
    In particular, the $\frac{k}{2}$ red edges form a maximum size matching, i.e.\ $\OPT =
    \frac{k}{2}$.

    Now consider some deterministic online algorithm $\calA$.
    Let $\alpha_i$ be the probability that $\calA$ matches the red edge in phase $i$ and let
    $\beta_i$ be the probability that $\calA$ matches the blue edge in phase $i$.
    Clearly, since the red and blue edges are determined independently and uniformly at random, we
    must have $\alpha_i = \beta_i$.
    Moreover, since at most one blue edge can be picked, we know $\alpha_1 + \dots + \alpha_{k /
    2} \leq 1$.
    Thus the expected size of the matching generated by $\calA$ is at most
    \[
        \alpha_1 + \dots + \alpha_{k / 2} + \beta_1 + \dots + \beta_{k / 2} \leq 2. \qedhere
    \]
\end{proof}

\begin{figure}[htb]
    \centering
    \begin{tikzpicture}
        % \draw[help lines] (0, 0) grid[step={(0.5, 0.5)}] (10, 6);

        \draw[rounded corners, red] (0.1, 0.6) -- (4.9, 0.6) -- (4.9, 0.9) -- (0.1, 0.9) -- cycle;
        \draw[rounded corners, red] (0.1, 1.1) -- (4.9, 1.1) -- (4.9, 1.4) -- (0.1, 1.4) -- cycle;
        \draw[rounded corners, red] (0.1, 1.6) -- (4.9, 1.6) -- (4.9, 1.9) -- (0.1, 1.9) -- cycle;
        \draw[rounded corners, red] (0.1, 2.1) -- (4.9, 2.1) -- (4.9, 2.4) -- (0.1, 2.4) -- cycle;
        \draw[rounded corners, red] (0.1, 2.6) -- (4.9, 2.6) -- (4.9, 2.9) -- (0.1, 2.9) -- cycle;

        \draw[rounded corners, blue] (0.1, 0.6) -- (0.4, 0.6) -- (0.4, 5.4) -- (0.1, 5.4) -- cycle;
        \draw[rounded corners, blue] (0.6, 1.1) -- (0.9, 1.1) -- (0.9, 5.4) -- (0.6, 5.4) -- (0.6, 4.9)
        -- (0.1, 4.9) -- (0.1, 4.6) -- (0.6, 4.6) -- cycle;
        \draw[rounded corners, blue] (1.1, 1.6) -- (1.4, 1.6) -- (1.4, 5.4) -- (1.1, 5.4) -- (1.1, 4.4)
        -- (0.1, 4.4) -- (0.1, 4.1) -- (1.1, 4.1) -- cycle;
        \draw[rounded corners, blue] (1.6, 2.1) -- (1.9, 2.1) -- (1.9, 5.4) -- (1.6, 5.4) -- (1.6, 3.9)
        -- (0.1, 3.9) -- (0.1, 3.6) -- (1.6, 3.6) -- cycle;
        \draw[rounded corners, blue] (2.1, 2.6) -- (2.4, 2.6) -- (2.4, 5.4) -- (2.1, 5.4) -- (2.1, 3.4)
        -- (0.1, 3.4) -- (0.1, 3.1) -- (2.1, 3.1) -- cycle;

        \node[anchor=west, red] at (5, 0.75) {\small $1$};
        \node[anchor=west, red] at (5, 1.25) {\small $2$};
        \node[anchor=west, red] at (5, 1.75) {\small $3$};
        \node[anchor=west, red] at (5, 2.25) {\small $4$};
        \node[anchor=west, red] at (5, 2.75) {\small $5$};

        \node[anchor=south, blue] at (0.25, 5.5) {\small $1$};
        \node[anchor=south, blue] at (0.75, 5.5) {\small $2$};
        \node[anchor=south, blue] at (1.25, 5.5) {\small $3$};
        \node[anchor=south, blue] at (1.75, 5.5) {\small $4$};
        \node[anchor=south, blue] at (2.25, 5.5) {\small $5$};

        \foreach \y in {0.75, 1.25, 1.75, 2.25, 2.75} {
            \foreach \x in {0.25, 0.75, 1.25, 1.75, 2.25, 2.75, 3.25, 3.75, 4.25, 4.75} {
                \node[circle, fill, inner sep=0.5] at (\x, \y) {};
            }
        }

        \foreach \y in {3.25, 3.75, 4.25, 4.75, 5.25} {
            \foreach \x in {0.25, 0.75, 1.25, 1.75, 2.25} {
                \node[circle, fill, inner sep=0.5] at (\x, \y) {};
            }
        }

        \fill[rounded corners, fill opacity=0.1] (0.1, 3.1) -- (2.4, 3.1) -- (2.4, 3.4) -- (0.1, 3.4) -- cycle;
        \fill[rounded corners, fill opacity=0.1] (0.1, 3.6) -- (2.4, 3.6) -- (2.4, 3.9) -- (0.1, 3.9) -- cycle;
        \fill[rounded corners, fill opacity=0.1] (0.1, 4.1) -- (2.4, 4.1) -- (2.4, 4.4) -- (0.1, 4.4) -- cycle;
        \fill[rounded corners, fill opacity=0.1] (0.1, 4.6) -- (2.4, 4.6) -- (2.4, 4.9) -- (0.1, 4.9) -- cycle;
        \fill[rounded corners, fill opacity=0.1] (0.1, 5.1) -- (2.4, 5.1) -- (2.4, 5.4) -- (0.1, 5.4) -- cycle;

        \node[anchor=west] at (2.5, 3.25) {\small $A_1$};
        \node[anchor=west] at (2.5, 3.75) {\small $A_2$};
        \node[anchor=west] at (2.5, 4.25) {\small $A_3$};
        \node[anchor=west] at (2.5, 4.75) {\small $A_4$};
        \node[anchor=west] at (2.5, 5.25) {\small $A_5$};
    \end{tikzpicture}
    \caption{Shown is gadget $G_{10}$ proving that a competitive ratio of $\frac{4}{k} + \epsilon$ is
    imposisble for $k = 10$. The numbers indicate in which phase each edge was added. The lightly
    shaded areas represent the vertex sets $A_1, \ldots, A_5$ which are useful for the construction
    of $H_k$.}
    \label{fig:4_over_k_ex}
\end{figure}
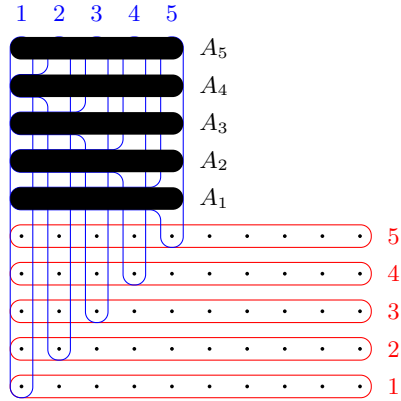

\begin{theorem}
    %If $k$ is a power of two, then 
    There does not exist a $\frac{2+o(1)}{k}$ competitive
    algorithm for the online hypergraph matching problem for any $\epsilon > 0$.
\end{theorem}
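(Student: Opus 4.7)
The plan is to apply Yao's principle (Lemma~\ref{lem:yao}) to an instance distribution $H_k$ obtained by stacking two copies of the gadget $G_k$ from Theorem~\ref{thm:4_over_k}, so that $\OPT$ roughly doubles to $k$ while the best deterministic algorithm still matches at most two edges in expectation. Concretely, $H_k$ will run in two stages. Stage~1 is a fresh copy of $G_k$, producing $k/2$ randomly designated red and $k/2$ blue edges. Stage~2 is another $G_k$-style construction, except that every arriving Stage~2 edge is enlarged to include one distinguished vertex from each Stage~1 blue edge; these are precisely the transversals $A_1,\ldots,A_{k/2}$ highlighted in Figure~\ref{fig:4_over_k_ex}. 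Each Stage~2 edge then consists of $k/2$ cross-stage transversal vertices (one per Stage~1 blue), $j-1$ vertices forming the internal Stage~2 transversal of past Stage~2 blues in phase $j$, and $k/2-(j-1)$ fresh vertices, summing to exactly $k$.

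First I would verify $\OPT(H_k)=k$. The Stage~1 reds and the Stage~2 reds each form a matching of size $k/2$ by the $G_k$ argument, and by choosing the distinguished vertex from each Stage~1 blue pairwise distinctly across the Stage~2 edges, every Stage~2 red can also be made disjoint from every Stage~1 red. A routine count shows this is possible: each Stage~1 blue from phase $\ell$ has $k-(\ell-1)$ fresh vertices yet is queried only $k-\ell$ times in total (by the later Stage~1 internal transversals together with the $k/2$ Stage~2 cross-stage transversals), leaving one vertex to spare. The algorithm analysis then hinges on the key observation that every Stage~2 edge, red or blue, contains a vertex from every Stage~1 blue, so once the algorithm picks any Stage~1 blue, all of Stage~2 is blocked. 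Writing $P_B$ for the probability of picking at least one Stage~1 blue, the $G_k$ argument gives $\mathbb{E}[\text{Stage 1 blues}]=P_B\leq 1$, and within-phase color symmetry gives $\mathbb{E}[\text{Stage 1 reds}]=P_B$ as well. Conditional on no Stage~1 blue, Stage~2 behaves like a standalone $G_k$, so $\mathbb{E}[\text{Stage 2 blues}]\leq 1-P_B$, and within-Stage~2 symmetry yields $\mathbb{E}[\text{Stage 2 reds}]\leq 1-P_B$ as well. Summing gives an expected matching size of at most $2P_B+2(1-P_B)=2$, and Yao's principle delivers the bound $\frac{2}{k}$, with the $o(1)$ slack absorbing the parity of $k$ and the small rounding effects.

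The main obstacle I expect is the vertex accounting step: each Stage~1 blue must simultaneously supply a distinct fresh vertex to each of its own later Stage~1 internal transversals and to each of the $k/2$ Stage~2 transversals, while every Stage~2 edge still fits into $k$ vertices and the resulting Stage~2 reds remain disjoint from all Stage~1 reds. The sets $A_1,\ldots,A_{k/2}$ indicated in Figure~\ref{fig:4_over_k_ex} are precisely the recipe for laying out these fresh slots in a compatible way, so the bookkeeping should go through with a little care.
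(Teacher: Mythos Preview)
Your proposal is correct and tracks the paper's approach closely: both extend $G_k$ via the transversal sets $A_1,\ldots,A_{k/2}$ so as to double the number of phases to $k$ while preserving the two key structural properties (every blue blocks all later edges; every red is disjoint from all later edges), whence $\OPT=k$ and the per-phase symmetry $\alpha_j=\beta_j$ caps any deterministic algorithm at $2$.

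The one genuine difference is how the second half is built. The paper defines $H_k$ \emph{recursively}: after the $k/2$ phases of $G_k$ it plugs in $H_{k/2}$ (which itself is $G_{k/2}$ followed by $H_{k/4}$, and so on down to $H_1$), and then applies the single clean analysis ``at most one blue overall, $\sum\alpha_j=\sum\beta_j\le 1$.'' You instead build Stage~2 \emph{flat}, as $k/2$ phases of internal size $k/2$; this works precisely because the two edges in each Stage~2 phase share both the cross-stage transversal $A_j$ and the internal transversal $T_j$, so each internal blue only needs to supply one fresh vertex per later phase. Your $P_B$-based accounting is then correct but slightly heavier than necessary: since every Stage~1 blue meets every Stage~2 edge through the $A_j$'s, all blues across both stages pairwise intersect, and the paper's one-line argument already applies to your construction verbatim. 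So your route trades the recursion for a direct construction, at the cost of a marginally longer analysis.
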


\begin{proof}[Proof Sketch] Due to space limitation, we give an outline of the main steps.
    We will use induction to create a distribution over graphs $H_k$ for all $k$, with the following properties:
    \begin{enumerate}
        \item There are $k$ red and $k$ blue edges.
        \item The edges appear in $k$ phases, each of which consists of one red and one blue edge
            where the color is chosen uniformly and independently at random.
        \item Every blue edge intersects all future edges.
        \item Every red edge is disjoint from all future edges.
    \end{enumerate}

    $H_1$ is trivial to construct.
    We will just have a single vertex which is simultaneously in both a red and blue singleton edge. Suppose that we can construct $H_\ell$, for all $1\leq\ell\leq k-1$.

    Now in order to construct $H_k$, we first employ the $\frac{k}{2}$ phases of $G_k$.
    After this we can construct $\frac{k}{2}$ disjoint sets $A_1, \ldots, A_{k / 2}$ of
    $\frac{k}{2}$ vertices
    each such that each $A_i$ intersects all blue edges and none of the red edges.
    See again Figure~\ref{fig:4_over_k_ex}. Now, for the remaining $k / 2$ phases, we recursively employ the distribution $H_{k / 2}$ as follows. In phase $i+\frac{k}{2}$, we extend the the two edges from phase $i$ of
    $H_{k / 2}$ by the set $A_i$ to form sets of size $k$. This gives us the two edges of rank $k$ for phase $i+\frac{k}{2}$. %the two edges %which we pick in phase $\frac{k}{2} + i$ are the 
   % that are obtained by extending the two edges from phase $i$ of
   % $H_{k / 2}$ by the set $A_i$ to form sets of size $k$.

    Finally, one may check that properties $1-4$ are satisfied by induction.
    Thus, we may conclude the proof similar to the proof of Theorem~\ref{thm:4_over_k}: the optimum
    solution picks all $k$ red edges whereas any deterministic online algorithm can only get an
    expected value of $2$ since at most one blue edge can be picked and red and blue edges in each phase are indistinguishable for the online algorithm.
\end{proof}

\section{Fractional Matchings}

Inspired by the Waterfilling (or Balancing) algorithm of \cite{kalyanasundaram2000optimal,mehta2007adwords}, designed for online bipartite matching and its variants, we propose the following online algorithm for fractional hypergraph matching in the edge arrival model.

\medskip
\begin{algorithm2e}[H]
    For each $e \in E$, set $y_e \coloneqq 0$.\\
    \For{each edge $e$ which arrives}{
        Increase $y_e$ continuously as long as $\sum_{i \in e} (k \, \ln(k))^{x_i - 1}
        \leq 1$ where $x_i \coloneqq \sum_{f \in E: i \in f} y_f$.
    }
    \caption{\textsc{Hypergraph Water-Filling}\label{alg:water_filling}}
\end{algorithm2e}
\medskip
The final value of variable $y_e$ in the algorithm is the fraction of edge $e$ that is included in the matching. At any moment, variable $x_i$ in the algorithm captures the total fraction of edges incident on resource $i$ that have been included in the matching. In other words, the value of $x_i$ is the fraction of $i$ that has already been ``matched" by the algorithm. In order to ``preserve" resources for future edges, Algorithm \ref{alg:water_filling} stops matching an edge when the value of $\sum_{i \in e} (k \, \ln(k))^{x_i - 1}$ reaches 1. For illustration, consider the following scenarios at the arrival %have \[x_i=0\,\, \forall i\in[k],\] on arrival of 
of edge $e$ incident on resources $\{1,2,\cdots,k\}$. 
\smallskip

\noindent  $(i)$ $x_i=0\,\, \forall i\in[k]$ on arrival of $e$. Then, the algorithm will match $\frac{\ln (\ln k)}{\ln (k)+\ln (\ln k)}$ fraction of the edge before stopping. 
\smallskip

\noindent $(ii)$ $x_i=0.5\,\, \forall i\leq \sqrt{k \ln (k)}$ and $x_i=0$ otherwise. Then, for sufficiently large values of $k$, the algorithm does not match any fraction of edge $e$.

We would like to note that Algorithm \ref{alg:water_filling} constructs a fractional matching by augmenting the primal
solution $y$, whereas the online packing algorithm of \cite{buchbinder2009online} augments the dual solution. We show that Algorithm 1 achieves the best possible competitive ratio guarantee for large $k$. %Recall that, we use $o(1)$ to gloss over a non-negative function $f(k)$, such that $\lim_{k\to+\infty} f(k) =0$
\begin{theorem}\label{thm:frac_lb}
    %For any $\epsilon > 0$ and $k$ large enough, 
    For some non-negative function $f(k)$ such that $\lim_{k\to+\infty} f(k)=0$,  Algorithm~\ref{alg:water_filling} is $\frac{1 -f(k)}{\ln(k)}$-competitive for the fractional online hypergraph matching problem.
\end{theorem}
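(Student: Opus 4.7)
The plan is to give an online primal-dual analysis against the LP dual $\min\sum_{i \in I} z_i$ subject to $\sum_{i \in e} z_i \geq 1$ for all $e \in E$, $z_i \geq 0$. At termination of Algorithm~\ref{alg:water_filling} I will exhibit a feasible dual solution whose value is close to $\OPT$ but can also be bounded by the primal value $\ALG = \sum_e y_e$.

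The dual I propose is
\[
    z_i \coloneqq \frac{(k \ln k)^{x_i} - 1}{k(\ln k - 1)},
\]
using the final load $x_i = \sum_{e \ni i} y_e$. For feasibility, the key observation is that by the update rule, every edge $e$ satisfies $\sum_{i \in e}(k \ln k)^{x_i - 1} \geq 1$ at termination: if $y_e > 0$, the phase stopped at equality; if $y_e = 0$, the sum already exceeded $1$ on arrival of $e$. In both cases the inequality is preserved since the $x_i$ only grow. Rearranging (multiply by $k\ln k$, subtract $k$, divide by $k(\ln k - 1)$) yields $\sum_{i \in e} z_i \geq 1$, so weak LP duality gives $\OPT \leq \sum_i z_i$.

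For the upper bound on $\sum_i z_i$, I will track its evolution as $\ALG$ grows. Because $z_i = 0$ when $x_i = 0$, the initial dual is zero, and when edge $e$ is being updated the dual grows at rate
\[
    \frac{d \sum_i z_i}{d y_e} = \frac{\ln(k \ln k)}{k(\ln k - 1)}\sum_{i \in e}(k \ln k)^{x_i} = \frac{\ln k \cdot \ln(k \ln k)}{\ln k - 1}\sum_{i \in e}(k \ln k)^{x_i - 1} \leq \frac{\ln k \cdot \ln(k \ln k)}{\ln k - 1},
\]
where the last step uses the algorithm's invariant $\sum_{i \in e}(k \ln k)^{x_i - 1} \leq 1$ during an update. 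Integrating over the execution yields $\sum_i z_i \leq \tfrac{\ln k \cdot \ln(k \ln k)}{\ln k - 1}\cdot\ALG$, and combining with weak duality gives $\ALG/\OPT \geq (\ln k - 1)/(\ln k \cdot \ln(k \ln k)) = (1 - o(1))/\ln k$ since $\ln(k \ln k)/(\ln k - 1) \to 1$.

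The delicate step, and the one I would flag as the main design obstacle, is the form of the dual. The naive choice $z_i = (k \ln k)^{x_i - 1}$ matches the update condition exactly but has $z_i = 1/(k \ln k) > 0$ even when $x_i = 0$, contributing an additive $|I|/(k \ln k)$ to the initial dual value that need not be $o(\OPT)$ (for example, instances with many vertex-disjoint edges have $|I| = \Theta(k \cdot \OPT)$). Shifting to $(k \ln k)^{x_i} - 1$ and renormalizing by $k(\ln k - 1)$ restores $z_i = 0$ at $x_i = 0$ while only weakening the feasibility constant by a factor $(1 - 1/\ln k)^{-1} = 1 + o(1)$, which is absorbed into the final $(1 - o(1))/\ln k$ bound; everything else is routine integration of the waterfilling dynamics.
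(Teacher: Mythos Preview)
Your proof is correct and yields the identical quantitative bound $\frac{1-1/\ln k}{\ln k + \ln\ln k}$ that the paper obtains, but the route differs in an instructive way. The paper's primal-dual argument introduces \emph{two} families of dual variables: a vertex revenue $r_i$ that accumulates $(k\ln k)^{x_i-1}\,\mathrm{d}t$ as $i$ is filled, and an edge utility $u_e$ that absorbs the residual $\bigl(1-\sum_{i\in e}(k\ln k)^{x_i-1}\bigr)\,\mathrm{d}t$; the accounting then guarantees $\sum_e u_e + \sum_i r_i = \ALG$ exactly, and feasibility $u_e+\sum_{i\in e} r_i \geq c$ is checked via a two-case split (some $x_i=1$ versus all $x_i<1$). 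You instead collapse everything to a single closed-form vertex dual $z_i=\frac{(k\ln k)^{x_i}-1}{k(\ln k-1)}$, get feasibility for free from the stopping condition (no cases), and bound $\sum_i z_i$ by tracking its growth rate against $\ALG$. Your offset-and-renormalize trick to make $z_i=0$ at $x_i=0$ plays exactly the role that the paper's edge utility $u_e$ plays: both absorb the $1/\ln k$ slack between the raw stopping condition and dual feasibility. Your version is arguably cleaner for the unweighted theorem; the paper's revenue/utility decomposition, on the other hand, extends more transparently to the edge-weighted free-disposal setting in their Theorem~\ref{thm:edge_weights}, where the utility of an edge naturally scales with its weight.
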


\begin{proof}
   Given a hypergraph $G$ with resources $I$ and set of hyperedges $E$, let $\ALG$ denote the total
    revenue of the online algorithm and let $\OPT$ denote the value of the optimal fractional offline solution. We use a primal-dual approach inspired by \cite{buchbinder2009online,devanur2013randomized} to prove the result. Using weak duality, it suffices to find a feasible solution to the following system of linear inequalities,
    \begin{eqnarray}
    \sum_{e\in E} u_e + \sum_{i\in I} r_i &\leq &\ALG,\label{dual1}\\
    u_e + \sum_{i \in e} r_i &\geq &\frac{1 - \frac{1 }{\ln(k)}}{\ln(k) + \ln(\ln(k))} \quad \forall e\in E.\label{dual2}
    \end{eqnarray}
    We set the dual variables using the following procedure. In the beginning, all variables are set to 0. When Algorithm~\ref{alg:water_filling} is matching edge $e$ in line~3, we interpret the
    quantity $\sum_{i \in e} (k \ln(k))^{x_i - 1}$ as the \emph{price} of
    edge $e$.
    Accordingly, if $e$ is matched some by some infinitesimal amount $\mathrm{d} t$, then, 
 \begin{enumerate}
     \item For every resource $i\in e$, we increase the \emph{revenue} $r_i$ by
    $(k \ln(k))^{x_i - 1} \mathrm{d} t$. 
    \item  We increase the \emph{utility} $u_e$ of $e$ by
    $\left(1 - \sum_{i \in e} (k \ln(k))^{x_i - 1}\right) \mathrm{d} t$.
 \end{enumerate} 
    Note that this implies that the total sum of all revenues and utilities is equal to the total
    size of the matching, i.e., inequality \eqref{dual1} is satisfied. 

   % Let us denote the total revenue of vertex $i \in V$ at the end of the algorithm by $r_i$ and
  %  the total utility of edge $e$ at the end (or rather after its arrival) by $u_e$.

%    \medskip
%    \noindent
%    \textbf{Claim:} For every $e \in E$, we have
 %   \[
 %       u_e + \sum_{i \in e} r_i \geq \frac{1 - 1 / %\ln(k)}{\ln(k) + \ln(\ln(k))}.
 %   \]
 %   \medskip

  %  First, let us see that this implies the theorem via a standard complementary slackness
 %   argument.
 %   For any given $\epsilon > 0$, we can pic $k$ large enough such
  %  that
   % \[
   %     \frac{1 - 1 / \ln(k)}{\ln(k) = \ln(\ln(k))} %\geq \frac{1 - \epsilon}{\ln(k)}.
  %  \]

 %   Now, let $y^\star$ be an optimum matching and let $y$ be the matching generated by
 %   Algorithm~\ref{alg:water_filling}.
  %  Then:
  %  \begin{align*}
     %   \frac{1 - \epsilon}{\ln(k)} \OPT &= \sum_{e %\in E} y^\star_e \left(\frac{1 -
   %     \epsilon}{\ln(k)}\right) \\
    %    &\leq \sum_{e \in E} y^\star_e \left(u_e + %\sum_{i \in e}
   %     r_i\right) \\
    %    &\leq \sum_{e \in E} u_e + \sum_{i \in V} r_i \\
   %     &= \ALG
  % \end{align*}
  %  which establishes the theorem.

 %   \noindent
 %  \textbf{Proof of Claim:} 
 It remains to show inequality \eqref{dual2} for every edge. Fix an arbitrary edge $e\in E$ and consider the following two cases.
 
   \noindent \textbf{Case 1:}  Let $e \in E$ be arbitrary and let $x_i$ be the final fill levels of the vertices $i \in e$.
    If $x_i = 1$ for any $i$, we know that
    \[
        r_i = \int_0^1{(k \ln(k))^{t - 1} \, \mathrm{d} t} = \frac{1 - 1 / (k \ln(k))}{\ln(k)
        + \ln(\ln(k))} \geq \frac{1 - 1  / \ln(k)}{\ln(k) + \ln(\ln(k))},
    \]
    so the revenue of this one vertex is already enough for the claim.

    \noindent \textbf{Case 2:} Otherwise, we can assume that $x_i < 1$ for all $i$.
    But in that case, at any time at which $e$ was being matched by some $\mathrm{d} t$, $e$ would
    have gotten a utility of at least,
    \[
        \max\left\{0,\left(1 - \sum_{i \in e} (k \ln(k))^{x_i - 1}\right) \mathrm{d} t\right\}.
    \]
    Let us denote $P \coloneqq \sum_{i \in e} (k \ln(k))^{x_i - 1}$, then this
    implies that $u_e \geq \max \{0, 1 - P\}$.
    In total, we get
    \begin{align*}
        u_e + \sum_{i \in e} r_i &\geq \max \{0, 1 - P\} + \sum_{i \in e} \int_0^{x_i}
        (k \ln(k))^{t - 1} \, \mathrm{d} t \\
        &= \max \{0, 1 - P\} + \frac{P - 1 / \ln(k)}{\ln(k) + \ln(\ln(k))}.
    \end{align*}

    Finally, observe that this quantity is minimized if $P = 1$ under the assumption that $\ln(k)
    + \ln(\ln(k)) \geq 1$.
    Thus the claim, and the theorem, follows.
\end{proof}

Next, we show that no online algorithm can beat the performance of Algorithm 1.

\begin{theorem}\label{thm:frac_ub}
   % For any $\epsilon > 0$, there does not exist any algorithm which is 
    For some non-negative function $f(k)$ such that $\lim_{k\to+\infty} f(k)=0$,  there does not exist any online algorithm which is $\frac{1 +
    f(k)}{\ln(k)}$-competitive for online hypergraph matching
    problem. % for arbitrarily large $k$.
\end{theorem}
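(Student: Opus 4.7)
The plan is to invoke Yao's minimax principle (Lemma~\ref{lem:yao}) and construct a distribution over $k$-uniform instances on which every deterministic online algorithm achieves expected (fractional) matching size at most $\frac{1 + o(1)}{\ln k} \cdot \OPT$. The qualitative $\Omega(\log k)$ hardness is already due to Buchbinder and Naor~\cite{buchbinder2009online}; the task here is to sharpen the multiplicative constant to $1$ so as to match the guarantee of Theorem~\ref{thm:frac_lb}.

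The instance I would use is a multi-phase symmetric construction with $L \approx \ln k$ phases. In phase $\ell = 1, \ldots, L$, a large pool of $k$-uniform hyperedges arrives; the pool is highly symmetric (say, the automorphism group of the revealed substructure acts transitively on the edges of the pool), so by a standard averaging argument any deterministic online algorithm must, up to vanishing error, assign the same fractional value $y_\ell$ to every edge in the pool. The overlap pattern between phases is engineered to mimic the extremal profile of Algorithm~\ref{alg:water_filling}: after phase $\ell$ the fill level of every resource touched by phase $\ell$ should be close to $\ell / L$, so that the saturation threshold $\sum_{i \in e} (k \ln k)^{x_i - 1} \leq 1$ is essentially tight at every phase boundary.

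Having fixed such an instance, I would pursue the following two-step analysis. First, an LP-dual argument that mirrors the proof of Theorem~\ref{thm:frac_lb} in reverse: the symmetric, ``saturated'' phase structure yields a feasible dual (fractional vertex cover) of value at most $(1 + o(1)) \cdot \mathbb{E}[\ALG] \cdot \ln k$, so by weak duality any primal matching, in particular $\OPT$, has size at most $(1 + o(1)) \cdot \mathbb{E}[\ALG] \cdot \ln k$. Second, an explicit ``hidden diagonal'' offline matching obtained by picking, from each phase, one edge whose vertices avoid those used by the diagonal edges of the other phases; this witnesses $\OPT = \Omega(n/k)$ where $n$ is the total number of resources. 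Combining the two gives the desired ratio $\frac{1 + o(1)}{\ln k}$. An equivalent but less slick route is to compute $\mathbb{E}[\ALG]$ phase by phase directly from the capacity constraints, but this requires carefully summing an almost-harmonic series, whereas the LP-dual route pins down the leading constant more transparently.

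The main obstacle I anticipate is calibrating the phase sizes and overlap structure so that the leading constant is exactly $1$, rather than merely $\Theta(1)$. A crude construction already yields some $\Omega(1/\log k)$ bound via the Buchbinder--Naor framework; matching the constant requires the instance's fill profile to track the waterfilling schedule of Algorithm~\ref{alg:water_filling} term by term, driving all slack in the saturation inequality to zero as $k \to \infty$. A secondary, routine technicality is to pad edges with dummy offline vertices (as noted in Section~\ref{sec:model}) to enforce exact $k$-uniformity when some intermediate overlap class would otherwise have size slightly below $k$.
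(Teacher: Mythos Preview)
Your proposal diverges substantially from the paper's argument and contains a real error in the analysis step. The paper does not use Yao's principle or a symmetric random instance at all. It first observes that for the \emph{fractional} problem randomization is useless (replace $\calA$ by its expectation to get a deterministic algorithm with the same guarantee), and then builds an \emph{adaptive} adversarial instance against a fixed deterministic $\calA$. Concretely: start with $l$ disjoint size-$k$ edges on a vertex set $U$; in each round set $m \leftarrow \lfloor m/(1+\delta)\rfloor$, repartition $U$ into as many size-$m$ edges as possible, let them arrive, and then---this is the key adaptive step---pick the $l$ of these edges on which $\calA$ has placed the most mass and restrict $U$ to their union. After roughly $\log_{1+\delta} k$ rounds the selected edges $E^\star$ all share the final $l$ vertices, so $y(E^\star) \leq l$; on the other hand the greedy selection ensures $y(E^\star) \geq \frac{1}{1+\delta - 1/l}\,\ALG$, while $\OPT$ is at least $(\delta l - 1)$ per round. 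Sending $\delta \to 0$ and $l \to \infty$ pins the constant to $1$. There is no symmetry averaging and no attempt to make the instance track the waterfilling profile.

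More seriously, your ``LP-dual argument'' is stated in the wrong direction. A feasible fractional vertex cover of value $V$ certifies $\OPT \leq V$ by weak duality for a maximization primal, so exhibiting a cover of value $(1+o(1))\,\mathbb{E}[\ALG]\,\ln k$ would give $\mathbb{E}[\ALG] \geq \OPT/((1+o(1))\ln k)$---a \emph{lower} bound on the algorithm's performance, i.e.\ a re-derivation of Theorem~\ref{thm:frac_lb}, not an impossibility. For the upper bound you need the opposite pairing: a direct upper bound on $\ALG$ from packing constraints together with a large explicit matching witnessing $\OPT$. Your ``alternative route'' of summing phase contributions from capacity constraints is the right shape, but you never specify the overlap structure concretely enough to verify that the leading constant is $1$ rather than some larger $\Theta(1)$; the paper's adaptive selection sidesteps this calibration problem entirely.
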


\begin{figure}[h]
    \centering
    \begin{tikzpicture}
        \foreach \x in {0.25, 0.75, 1.25} {
            \foreach \y in {0.25, 0.75, 1.25, 1.75, 2.25, 2.75, 3.25, 3.75, 4.25, 4.75} {
                \node[circle, fill, inner sep=0.5] at (\x, \y) {};
            }
        }

        \draw[rounded corners, blue, fill, fill opacity=0.05] (0.1, 0.1) -- (1.4, 0.1) -- (1.4, 1.9) --
        (1.1, 1.9) -- (1.1, 1.4) -- (0.1, 1.4) -- cycle;
        \draw[rounded corners, blue, fill, fill opacity=0.05] (0.1, 1.6) -- (0.9, 1.6) -- (0.9, 2.1) --
        (1.4, 2.1) -- (1.4, 3.4) -- (0.6, 3.4) -- (0.6, 2.9) -- (0.1, 2.9) -- cycle;
        \draw[rounded corners, blue, fill, fill opacity=0.05] (0.1, 3.1) -- (0.4, 3.1) -- (0.4, 3.6) --
        (1.4, 3.6) -- (1.4, 4.9) -- (0.1, 4.9) -- cycle;

        \draw[rounded corners, red, fill, fill opacity=0.1] (0.1, 0.1) -- (1.4, 0.1) -- (1.4, 0.9)
        -- (0.1, 0.9) -- cycle;
        \draw[rounded corners, red, fill, fill opacity=0.1] (0.1, 1.1) -- (1.4, 1.1) -- (1.4, 1.9)
        -- (0.1, 1.9) -- cycle;
        \draw[rounded corners, red, fill, fill opacity=0.1] (0.1, 2.1) -- (1.4, 2.1) -- (1.4, 2.9)
        -- (0.1, 2.9) -- cycle;
        \draw[rounded corners, red, fill, fill opacity=0.1] (0.1, 3.1) -- (1.4, 3.1) -- (1.4, 3.9)
        -- (0.1, 3.9) -- cycle;
        \draw[rounded corners, red, fill, fill opacity=0.1] (0.1, 4.1) -- (1.4, 4.1) -- (1.4, 4.9)
        -- (0.1, 4.9) -- cycle;

        \begin{scope}[shift={(0.5, 0)}]
            \foreach \x in {2.25, 2.75} {
                \foreach \y in {0.25, 0.75, 1.25, 1.75, 2.25, 2.75, 3.25, 3.75, 4.25} {
                    \node[circle, fill, inner sep=0.5] at (\x, \y) {};
                }
            }

            \draw[rounded corners, blue, fill opacity=0.05] (2.1, 0.1) -- (2.9, 0.1) -- (2.9, 1.4)
            -- (2.1, 1.4) -- cycle;
            \draw[rounded corners, blue, fill opacity=0.05] (2.1, 1.6) -- (2.9, 1.6) -- (2.9, 2.9)
            -- (2.1, 2.9) -- cycle;
            \draw[rounded corners, blue, fill opacity=0.05] (2.1, 3.1) -- (2.9, 3.1) -- (2.9, 4.4)
            -- (2.1, 4.4) -- cycle;

            \draw[rounded corners, red, fill, fill opacity=0.1] (2.1, 0.1) -- (2.9, 0.1) -- (2.9,
            0.9) -- (2.1, 0.9) -- cycle;
            \draw[rounded corners, red, fill, fill opacity=0.1] (2.1, 1.1) -- (2.9, 1.1) -- (2.9,
            1.9) -- (2.1, 1.9) -- cycle;
            \draw[rounded corners, red, fill, fill opacity=0.1] (2.1, 2.1) -- (2.9, 2.1) -- (2.9,
            2.9) -- (2.1, 2.9) -- cycle;
            \draw[rounded corners, red, fill, fill opacity=0.1] (2.1, 3.1) -- (2.9, 3.1) -- (2.9,
            3.9) -- (2.1, 3.9) -- cycle;
        \end{scope}

        \begin{scope}[shift={(1, 0)}]
            \foreach \x in {3.75, 4.25} {
                \foreach \y in {0.25, 0.75, 1.25, 1.75, 2.25, 2.75} {
                    \node[circle, fill, inner sep=0.5] at (\x, \y) {};
                }
            }

            \draw[rounded corners, blue, fill opacity=0.05] (3.6, 0.1) -- (4.4, 0.1) -- (4.4, 0.9)
            -- (3.6, 0.9) -- cycle;
            \draw[rounded corners, blue, fill opacity=0.05] (3.6, 1.1) -- (4.4, 1.1) -- (4.4, 1.9)
            -- (3.6, 1.9) -- cycle;
            \draw[rounded corners, blue, fill opacity=0.05] (3.6, 2.1) -- (4.4, 2.1) -- (4.4, 2.9)
            -- (3.6, 2.9) -- cycle;

            \draw[rounded corners, red, fill, fill opacity=0.1] (3.6, 0.1) -- (4.4, 0.1) -- (4.4,
            0.4) -- (3.6, 0.4) -- cycle;
            \draw[rounded corners, red, fill, fill opacity=0.1] (3.6, 0.6) -- (4.4, 0.6) -- (4.4,
            0.9) -- (3.6, 0.9) -- cycle;
            \draw[rounded corners, red, fill, fill opacity=0.1] (3.6, 1.1) -- (4.4, 1.1) -- (4.4,
            1.4) -- (3.6, 1.4) -- cycle;
            \draw[rounded corners, red, fill, fill opacity=0.1] (3.6, 1.6) -- (4.4, 1.6) -- (4.4,
            1.9) -- (3.6, 1.9) -- cycle;
            \draw[rounded corners, red, fill, fill opacity=0.1] (3.6, 2.1) -- (4.4, 2.1) -- (4.4,
            2.4) -- (3.6, 2.4) -- cycle;
            \draw[rounded corners, red, fill, fill opacity=0.1] (3.6, 2.6) -- (4.4, 2.6) -- (4.4,
            2.9) -- (3.6, 2.9) -- cycle;
        \end{scope}

        \begin{scope}[shift={(1.5, 0)}]
            \foreach \x in {5.25, 5.75} {
                \foreach \y in {0.25, 0.75, 1.25} {
                    \node[circle, fill, inner sep=0.5] at (\x, \y) {};
                }
            }

            \draw[rounded corners, blue, fill opacity=0.05] (5.1, 0.1) -- (5.9, 0.1) -- (5.9, 0.4)
            -- (5.1, 0.4) -- cycle;
            \draw[rounded corners, blue, fill opacity=0.05] (5.1, 0.6) -- (5.9, 0.6) -- (5.9, 0.9)
            -- (5.1, 0.9) -- cycle;
            \draw[rounded corners, blue, fill opacity=0.05] (5.1, 1.1) -- (5.9, 1.1) -- (5.9, 1.4)
            -- (5.1, 1.4) -- cycle;

            \draw[rounded corners, red, fill, fill opacity=0.1] (5.1, 0.1) -- (5.4, 0.1) -- (5.4,
            0.4) -- (5.1, 0.4) -- cycle;
            \draw[rounded corners, red, fill, fill opacity=0.1] (5.1, 0.6) -- (5.4, 0.6) -- (5.4,
            0.9) -- (5.1, 0.9) -- cycle;
            \draw[rounded corners, red, fill, fill opacity=0.1] (5.1, 1.1) -- (5.4, 1.1) -- (5.4,
            1.4) -- (5.1, 1.4) -- cycle;

            \draw[rounded corners, red, fill, fill opacity=0.1] (5.6, 0.1) -- (5.9, 0.1) -- (5.9,
            0.4) -- (5.6, 0.4) -- cycle;
            \draw[rounded corners, red, fill, fill opacity=0.1] (5.6, 0.6) -- (5.9, 0.6) -- (5.9,
            0.9) -- (5.6, 0.9) -- cycle;
            \draw[rounded corners, red, fill, fill opacity=0.1] (5.6, 1.1) -- (5.9, 1.1) -- (5.9,
            1.4) -- (5.6, 1.4) -- cycle;
        \end{scope}

        \begin{scope}[shift={(2, 0)}]
            \foreach \x in {6.75} {
                \foreach \y in {0.25, 0.75, 1.25} {
                    \node[circle, fill, inner sep=0.5] at (\x, \y) {};
                }
            }

            \draw[rounded corners, blue, fill opacity=0.05] (6.6, 0.1) -- (6.9, 0.1) -- (6.9, 0.4)
            -- (6.6, 0.4) -- cycle;
            \draw[rounded corners, blue, fill opacity=0.05] (6.6, 0.6) -- (6.9, 0.6) -- (6.9, 0.9)
            -- (6.6, 0.9) -- cycle;
            \draw[rounded corners, blue, fill opacity=0.05] (6.6, 1.1) -- (6.9, 1.1) -- (6.9, 1.4)
            -- (6.6, 1.4) -- cycle;
        \end{scope}

        \node at (2, 2.25) {$\Rightarrow$};
        \node at (4, 1.5) {$\Rightarrow$};
        \node at (6, 0.75) {$\Rightarrow$};
        \node at (8, 0.75) {$\Rightarrow$};
    \end{tikzpicture}
    \caption{Shown is the upper-bounding construction with $k = 10$, $l = 3$, $\delta = 0.5$.
    In each step we replace the blue edges with as many red edges of $\frac{1}{1 + \delta}$ times
    the size as possible.
    Then we pick the $l$ red edges that the algorithm puts the most weight on, make those the new
    blue edges and repeat until only singleton edges are left.}
    \label{fig:frac_ub}
\end{figure}
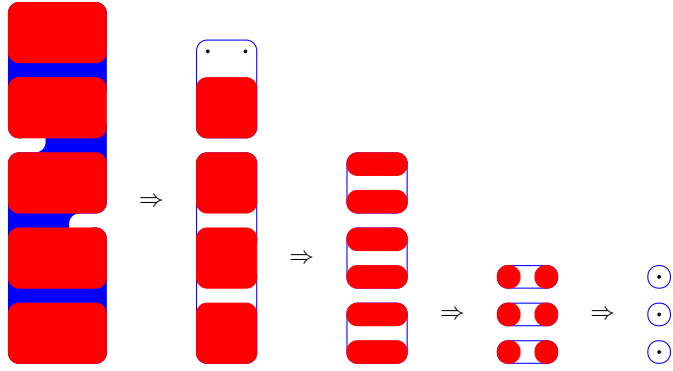

\begin{proof}[Proof Sketch] Due to space limitation, we give an outline of the main steps.
    Let $\calA$ be some algorithm for the fractional online hypergraph matching
    problem.
    We can assume w.l.o.g., that $\calA$ is deterministic.
    This is because if $\calA$ is randomized, we may create another algorithm $\calA'$ which simply
    fractionally allocates every edge $e$ with the expected value of $\calA$.
    Then $\calA'$ is a deterministic algorithm that performs just as well as $\calA$.

    Fix some large $l \in \bbN$ and small $\delta > 0$.
    We will now construct an instance in with some hyperedges may have less than $k$ vertices. %that is not $k$-uniform, but in which each edge has size at
   % most $k$.
  %  As noted in Section 2, it is easy to transform this into a $k$-uniform instance.
    The instance is created according to the following procedure (see Figure~\ref{fig:frac_ub}).
    \begin{enumerate}
        \item Set $m \leftarrow k$ and let $l$ disjoint edges of size $k$ arrive.
            Let $U$ be the set of all vertices in these $l$ edges.
        \item If $m = 0$, stop. Otherwise, set $m \leftarrow \left\lfloor \frac{m}{1 + \delta}
            \right\rfloor$.
        \item Partition $U$ into as many disjoint edges of size $m$ as possible and let these
            arrive.
        \item Let $e_1, \ldots, e_l$ be the $l$ of these edges that $\calA$ matches the most.
        \item Update $U \leftarrow e_1 \cup \dots \cup e_l$ and go back to step~2.
    \end{enumerate}

    Now let $\alpha$ be the competitive ratio of $\calA$.
    Our first observation is that steps 2--5 execute $(1 - o(1)) \log_{1 + \delta}(k)$ many times as $k \rightarrow \infty$.
    Moreover, in each iteration, we cover $l$ edges with
    \[
        \left\lfloor \frac{l m}{\left\lfloor \frac{m}{1 + \delta}\right\rfloor} \right\rfloor \geq
        (1 + \delta) l - 1
    \]
    edges.
    Thus $\ALG \geq \alpha \OPT \geq \alpha (1 - o(1)) \log_{1 + \delta}(k) (\delta l - 1)$.

    Let $E^\star \subseteq E$ be the set of edges which are picked at various points in step~4 and
    let $y$ be the fractional matching constructed by $\calA$.
    Then because these edges are always the $l$ most covered edges we know that
    \[
        y(E^\star) \geq \min_{m \geq 1} \frac{l}{\left\lfloor \frac{l m}{\left\lfloor \frac{m}{1 +
        \delta} \right\rfloor} \right\rfloor} \ALG \geq \frac{1}{1 + \delta - \frac{1}{l}} \ALG.
    \]

    Lastly, we know that that all edges in $E^\star$ overlap in $l$ vertices, namely the $l$
    vertices that are contained in the final iteration of the loop.
    This implies that $y(E^\star) \leq l$.
    Combining these inequalities, we thus get
    \begin{align*}
        \alpha &\leq \frac{\left(1 + \delta - \frac{1}{l}\right) l}{(1 - o(1)) \log_{1 +
        \delta}(k) (\delta l - 1)} \\
        &= \frac{((1 + \delta) l - 1) \ln(1 + \delta) }{(1 - o(1)) (\delta l - 1)} \cdot
        \frac{1}{\ln(k)}
    \end{align*}

    Finally, observe that for small $\delta$, large $l$ and large $k$, we get $\alpha < \frac{1 +
    \epsilon}{\ln(k)}$ as claimed.
\end{proof}

\section{Edge-Weights}
 In this section, we consider the free disposal model for online weighted matching \cite{feldman2009online}, and generalize Algorithm 1 to online fractional weighted hypergraph matching with free disposal.
 In this model, we are allowed to drop previously matched edges with no penalty (though of course we
 do not count such dropped edges towards our objective function).

\medskip
\begin{algorithm2e}[H]
    For each $e \in E$, let $y_e \coloneqq 0$. \\
    For each $i \in V$, let $f_i(t) \coloneqq \sum_{e : i \in
    e, w_e \geq t} y_e$ for all $t \geq 0$. \\
    \For{each edge $e$ which arrives}{
        \While {$\sum_{i \in e} \int_0^{w_e} (k \ln(k))^{f_i(t) - 1} \, \mathrm{d} t
        \leq 1$} {
            \For{$i \in e$ with $x_i = 1$} {
                Let $e^-_i$ be a minimum weight edge with $i \in e^-_i$ and $y_{e^-_i}
                > 0$. \\
                $y_{e^-_i} \leftarrow y_{e^-_i} - \mathrm{d} x$
            }
            $y_e \leftarrow y_e + \mathrm{d} x$
        }
    }
    \caption{\textsc{Hypergraph Weighted Water-Filling}\label{alg:weighted_wf}}
\end{algorithm2e}
\medskip

\begin{theorem}\label{thm:edge_weights}
   % For any $\epsilon > 0$ and $k$ large enough, Algorithm~\ref{alg:weighted_wf} is $\frac{1 -
   % \epsilon}{\ln(k)}$-competitive 
   For some non-negative function $f(k)$ such that $\lim_{k\to+\infty} f(k)=0$,
    Algorithm~\ref{alg:weighted_wf} is $\frac{1 -f(k)}{\ln(k)}$-competitive for online fractional weighted hypergraph
    matching problem with free disposal.
\end{theorem}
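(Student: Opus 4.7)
My plan is to extend the primal-dual analysis of Theorem~\ref{thm:frac_lb} to the weighted free-disposal setting. The LP relaxation is $\max \sum_e w_e x_e$ subject to $\sum_{e \ni i} x_e \le 1$ for all $i$; its dual asks for nonnegative $(r_i)$ with $\sum_{i \in e} r_i \ge w_e$ for every $e$. By weak duality, it suffices to exhibit nonnegative $(r_i)_i$ and $(u_e)_e$ with $\sum_i r_i + \sum_e u_e \le \ALG$ and $u_e + \sum_{i \in e} r_i \ge \frac{1-f(k)}{\ln(k)}\, w_e$ for every $e$.

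I build the dual variables online as Algorithm~\ref{alg:weighted_wf} executes. For each inner-loop increment of $dx$ to $y_e$ (which, for each saturated $i \in e$, is paired with a drop of $dx$ from the minimum-weight edge $e_i^-$), write $L_i := w_{e_i^-}$ if $x_i = 1$ and $L_i := 0$ otherwise; the net change in $f_i(t)$ is zero on $[0, L_i]$ and $+dx$ on $(L_i, w_e]$. I let $r_i$ grow by $\int_{L_i}^{w_e}(k\ln(k))^{f_i(t)-1}\, dt \cdot dx$ and let $u_e$ absorb the remainder $\bigl(w_e - \sum_{i: x_i = 1} w_{e_i^-}\bigr)dx - \sum_{i \in e} (r_i\text{-growth})$, so that $\sum_i r_i + \sum_e u_e$ grows in lockstep with the net change in $\ALG = \sum_e w_e y_e$. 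A key structural observation is that for any fixed $t$, $f_i(t)$ is pointwise nondecreasing in time: drops only reduce $f_i$ on $[0, L_i]$ while the simultaneous addition to $y_e$ raises $f_i$ on the larger interval $[0, w_e]$ by the same amount. Consequently $r_i$ admits the closed form $\int_0^\infty \int_0^{f_i^{\mathrm{fin}}(t)}(k\ln(k))^{s-1}\, ds\, dt$, identical to the expression one would obtain without disposal.

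The main obstacle is verifying dual feasibility. Fix $e$ and let $f_i^{\mathrm{post}}$ denote the state immediately after $e$ is processed. In Case 1, some $i \in e$ has $f_i^{\mathrm{post}}(w_e) = 1$; then by monotonicity of $f_i$ in $t$ and in time, $f_i^{\mathrm{fin}}(t) = 1$ on $[0, w_e]$, so $r_i \ge w_e \cdot \frac{1 - 1/(k\ln(k))}{\ln(k\ln(k))} \ge \frac{1-f(k)}{\ln(k)}\, w_e$ already suffices. In Case 2, $f_i^{\mathrm{post}}(w_e) < 1$ for every $i \in e$, and the while loop must have terminated with the price $P := \sum_{i \in e}\int_0^{w_e}(k\ln(k))^{f_i^{\mathrm{post}}(t)-1}\, dt \ge 1$. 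Bounding $\sum_{i \in e} r_i \ge \frac{P - w_e/\ln(k)}{\ln(k\ln(k))}$ via time-monotonicity, and combining with the $u_e$ contribution (which covers at least $\max\{0, w_e - P\}$-type slack after cancellations against the disposal weights $w_{e_i^-}$), the sum $u_e + \sum_{i \in e} r_i$ is minimized at $P = 1$ and yields the required bound, by a calculation structurally identical to the one at the end of the proof of Theorem~\ref{thm:frac_lb}. The primary technical hurdle will be the bookkeeping in Case 2 to verify that the disposal terms $w_{e_i^-}$ appearing in $u_e$ do not spoil the lower bound, which relies crucially on the time-monotonicity of $f_i(t)$ at each fixed threshold.
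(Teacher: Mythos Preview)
Your approach mirrors the paper's primal-dual proof: the same revenue/utility split driven by the price $\sum_{i\in e}\int_0^{w_e}(k\ln k)^{f_i(t)-1}\,dt$, the same threshold-integral expression for $r_i$, and the same minimization to close the dual constraint. Two small issues to tighten.

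First, your ``key structural observation'' that $f_i(t)$ is pointwise nondecreasing in time is only justified for $i\in e$ during $e$'s step: a vertex $j\in e_i^-\setminus e$ sees $f_j$ strictly decrease when $y_{e_i^-}$ is dropped, with no compensating increase. Hence the ``closed form'' for $r_i$ should really be the inequality
\[
r_i \;\ge\; \int_0^\infty\!\!\int_0^{f_i^{\mathrm{fin}}(t)}(k\ln k)^{s-1}\,ds\,dt,
\]
since revenue is accumulated on upward moves of $f_i$ but never rescinded on downward ones. This is harmless for the proof (only the lower bound is used), but the equality you assert does not hold in general.

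Second, ``minimized at $P=1$'' is a carryover from the unweighted case. With your definition $P=\sum_{i\in e}\int_0^{w_e}(k\ln k)^{f_i(t)-1}\,dt$, the expression $\max\{0,w_e-P\}+\frac{P-w_e/\ln k}{\ln(k\ln k)}$ is minimized at $P=w_e$, giving exactly $w_e\cdot\frac{1-1/\ln k}{\ln k+\ln\ln k}$ (this is the paper's $\int_0^{w_e}P(t)\,dt=w_e$).
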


\begin{proof}[Proof Sketch] Due to space limitation, we give an outline of the main steps.
    The proof will use a similar primal-dual approach as Theorem~\ref{thm:frac_lb}.
    Once again, when Algorithm~\ref{alg:weighted_wf} is matching an edge $e$ by some amount
    $\mathrm{d} x$, we split the resulting gain into \emph{revenue} and \emph{utility} through the
    price $\sum_{i \in e} \int_0^{w_e} (k \ln(k))^{f_i(t) - 1} \, \mathrm{d} t$.
    However, this time the price is split into three parts:

    \begin{enumerate}
        \item Each $i \in e$, earns a revenue of $\sum_{i \in e} \int_{w_{e^-_i}}^{w_e}
            (k \ln(k))^{f_i(t) - 1} \, \mathrm{d} t \mathrm{d} x$ where $w_{e^-_i} \coloneqq 0$ if
            $x_i < 1$.
        \item $e$ earns a utility of $\left(w_e - \sum_{i \in e} \int_{0}^{w_e}
            (k \ln(k))^{f_i(t) - 1} \, \mathrm{d} t\right) \mathrm{d} x$.
        \item The remaining
            \begin{align*}
                &\phantom{= } \left(\sum_{i \in e} \int_0^{w_e} (k \ln(k))^{f_i(t) - 1} \,
                \mathrm{d} t - \sum_{i \in e} \int_{w_{e^-_i}}^{w_e} (k \ln(k))^{f_i(t) - 1} \,
                \mathrm{d} t \right) \mathrm{d} x \\
                &= \sum_{i \in e} w_{e^-_i} \mathrm{d} x
            \end{align*}
            can be thought of as \enquote{paying off} the edges $e^-_i$ for the decrease in
            $y_{e^-_i}$.
    \end{enumerate}

    This means that as $e$ gets matched by some $\mathrm{d} x$, the total increase in the weight of
    the matching is $\left(w_e - \sum_{i \in e} w_{e^-_i}\right) \mathrm{d} x$ which is also the
    total increase in all revenues and utilities.
    Thus, if $r_i$ is the final revenue of $i \in V$ and $u_e$ is the final utility of $e \in E$,
    this implies that $\ALG = \sum_{i \in V} r_i + \sum_{e \in E} u_e$.
    Recall also from the proof of Theorem~\ref{thm:frac_lb}, that this, together with the following
    claim, establishes the result.

    \medskip
    \noindent
    \textbf{Claim:} For any $e \in E$, we have
    \[
        u_e + \sum_{i \in e} r_i \geq w_e \frac{1 - 1 / \ln(k)}{\ln(k) + \ln(\ln(k))}.
    \]
    \medskip

    \noindent
    \textbf{Proof of Claim:}
    Let $f_i$ be the step function defined in Algorithm~\ref{alg:weighted_wf} at the end of the
    algorithm.
    Then because prices (and utilities) are non-decreasing, we know that
    \[
        u_e \geq w_e - \sum_{i \in e} \int_0^{w_e} (k \ln(k))^{f_i(t) - 1} \, \mathrm{d} t
    \]
    On the other hand, the total revenue collected by each $i \in V$ satisfies
    \begin{align*}
        r_i &= \int_0^\infty \int_0^{f_i(t)} (k \ln(k))^{s - 1} \, \mathrm{d} s \,
        \mathrm{d} t \\
        &= \int_0^\infty \frac{\sum_{i \in e} (k \ln(k))^{f_i(t) - 1} -
        1 / \ln(k)}{\ln(k) + \ln(\ln(k))} \, \mathrm{d} t
    \end{align*}

    Now let $P(t) \coloneqq \sum_{i \in e} (k \ln(k))^{f_i(t) - 1}$, then this means
    that
    \begin{align*}
        u_e + \sum_{i \in e} r_i &= \max \left\{0, w_e - \int_0^{w_e} P(t) \, \mathrm{d} t \right\}
        + \int_0^\infty \frac{P(t) - 1 / \ln(k)}{\ln(k) + \ln(\ln(k))} \, \mathrm{d} t \\
        &\geq \max \left\{0, w_e - \int_0^{w_e} P(t) \, \mathrm{d} t \right\} + \int_0^{w_e}
        \frac{P(t) - 1 / \ln(k)}{\ln(k) + \ln(\ln(k))} \, \mathrm{d} t \\
        &= \max \left\{0, w_e - \int_0^{w_e} P(t) \, \mathrm{d} t \right\}
        + \frac{\int_0^{w_e} P(t)\, \mathrm{d}t - w_e / \ln(k)}{\ln(k) + \ln(\ln(k))}.
    \end{align*}
    Since this expression is minimized for $\int_0^{w_e} P(t) \, \mathrm{d} t = w_e$, the claim and
    thus the theorem follows.
\end{proof}

\section{Conclusion}

In this paper we have given a tight asymptotic bound for the fractional $k$-uniform hypergraph
matching problem and an almost tight bound for the integral variant.
This leaves room for some interesting directions for future research:

A major open problem is to beat the $\frac{1}{k}$ lower bound in the
integral setting. In fact, just recently Gamlath et al.\ \cite{gamlath2019online} showed that for
$k = 2$, no algorithm beats $\frac{1}{2}$, even if the underlying graph is bipartite.
However, their construction in fact shows this result for the fractional setting and where we know
how to beat $\frac{1}{k}$ for large $k$.
It thus remains open whether $\frac{1}{k} + \epsilon$ is achievable for \emph{any} $k$. In fact, for small $k$, one may also explicitly distinguish between edge arrival and vertex arrival models
as mentioned in Section~\ref{sec:model} or even the fully-online arrival model of Huang et al.\
\cite{huang2020fully}.
To the best of our knowledge, these models have only been explored for $k = 2$ and showing any
improvement over $\frac{1}{k}$ for $k > 2$ even under additional assumptions such as
$k$-partiteness would be rather interesting.

Finally, we remark that even in the fractional setting, exactly tight bounds are only known for $k
= 2$ and finding a tight non-asymptotic result remains open. % regime between $k = 2$ and $k \rightarrow \infty$ as
%studied here.

\subsubsection{Acknowledgements} We would like to thank Vijay Vazirani for several helpful
comments during both the research and writing process.

%
% ---- Bibliography ----
%
% BibTeX users should specify bibliography style 'splncs04'.
% References will then be sorted and formatted in the correct style.
%
\bibliographystyle{splncs04}
\bibliography{references}

\end{document}